\theoremstyle{plain}
\newtheorem{theorem}{Theorem}
\newtheorem{lemma}{Lemma}
\newtheorem*{lemma*}{Lemma}
\theoremstyle{remark}
\theoremstyle{definition}
\DeclareMathOperator{\Tr}{Tr}
\def\BibTeX{{\rm B\kern-.05em{\sc i\kern-.025em b}\kern-.08em
    T\kern-.1667em\lower.7ex\hbox{E}\kern-.125emX}}
\begin{document}

\title{Double Markovity for quantum systems\thanks{M.H. was supported in part by 
the Guangdong Provincial Quantum Science Strategic Initiative (Grant No. GDZX2505003), 
the General R\&D Projects of 1+1+1 CUHK-CUHK(SZ)-GDST Joint Collaboration Fund (Grant No. GRDP2025-022), 
and
the Shenzhen International Quantum Academy (Grant No. SIQA2025KFKT07).}
}
% {\footnotesize \textsuperscript{*}Note: Sub-titles are not captured for https://ieeexplore.ieee.org  and
% should not be used}

\author{
\IEEEauthorblockN{Masahito Hayashi}
\IEEEauthorblockA{\textit{School of Data Science} \\
\textit{The Chinese University of Hong Kong, Shenzhen}, China \\
\textit{International Quantum Academy (SIQA)}, Shenzhen, China \\
\textit{Graduate School of Mathematics, Nagoya University}, Japan \\
email: hmasahito@cuhk.edu.cn}
\and
\IEEEauthorblockN{Jinpei Zhao}
\IEEEauthorblockA{\textit{Department of Information Engineering} \\
\textit{The Chinese University of Hong Kong}\\
Shatin, Hong Kong, China \\
jinpei@ie.cuhk.edu.hk}
}
% \author{Masahito Hayashi}%\email{masahito@math.nagoya-u.ac.jp}
% \affiliation{School of Data Science, The Chinese University of Hong Kong, Shenzhen, Longgang District, Shenzhen, 518172, China}
% \affiliation{International Quantum Academy (SIQA), Futian District, Shenzhen 518048, China}
% \affiliation{Graduate School of Mathematics, Nagoya University, Nagoya, 464-8602, Japan}

\maketitle

\begin{abstract}
The subadditivity--doubling--rotation (SDR) technique is a powerful route to Gaussian optimality in classical information theory and relies on strict subadditivity and its equality-case analysis, where double Markovity is a standard tool. We establish quantum analogues of double Markovity. For tripartite states, we characterize the simultaneous Markov conditions A–B–C and A–C–B via compatible projective measurements on B and C that induce a common classical label J yielding A–J–(B,C). For strictly positive four-party states, we show that A–(B,D)–C and A–(C,D)–B hold if and only if A–D–(B,C) holds. These results remove a key bottleneck in extending SDR-type arguments to quantum systems.
\end{abstract}

% \begin{IEEEkeywords}
% component, formatting, style, styling, insert.
% \end{IEEEkeywords}

\section{Introduction}

Gaussian distributions play a distinguished role not only in classical information theory,
but also in quantum information theory, where Gaussian extremality questions arise in a variety
of entropy inequalities and optimization problems for quantum systems.
A recurring theme---spanning functional inequalities and multi-terminal problems---is that
extremizers of information-theoretic functionals under second-moment constraints are often Gaussian.

In the classical setting, a particularly robust route to Gaussian extremality is the
\emph{subadditivity--doubling--rotation} (SDR) proof paradigm.
At the heart of SDR lies a \emph{strict} form of subadditivity: one needs to identify when
equality in a subadditivity inequality forces an independence (or conditional independence)
structure.
A widely used tool for establishing such equality conditions is the \emph{double Markovity}
property of Markov chains (cf.\ \cite[Exercise~16.25]{csk11}), which, informally, says that
if $A$ is conditionally independent of $C$ given $B$ and also conditionally independent of $B$
given $C$, then $B$ and $C$ must share a common ``summary'' variable that screens off $A$.

Historically, the ``doubling--rotation'' idea originates in functional analysis:
it appeared in Lieb's work \cite{lie90} on Gaussian kernels and in Carlen's work \cite{Car91}
on Gaussian optimality for the log-Sobolev inequality.
In information theory, it was brought to the forefront by Geng and Nair \cite{GengNair14}
in the study of Gaussian broadcast channels, where it was combined with subadditivity to
establish Gaussian optimality.
Since then, SDR-type arguments have been used to prove a number of information inequalities
\cite{Cou18,CouJia14,kne15,Gol16,YangEtal17,ZhaEtal18}, and their use in proving the entropy power
inequality was developed in \cite{ajn19}.

While quantum information theory has analogues of many classical inequalities, extending the SDR
paradigm to quantum systems faces a conceptual bottleneck: an appropriate \emph{quantum} counterpart
of double Markovity has not been available in a form suitable for equality-case analysis.
The purpose of this paper is to fill this gap by establishing two quantum versions of double Markovity
that parallel the most commonly used classical formulations. These results remove an important obstacle
in transporting the SDR methodology to quantum information-theoretic settings.

To formulate these quantum analogues precisely, we next recall the notion of a quantum Markov chain
and state the corresponding ``double Markovity'' question in quantum terms.
We work with finite-dimensional quantum systems $A,B,C,\ldots$ with Hilbert spaces
$\mathcal H_A,\mathcal H_B,\mathcal H_C,\ldots$.
A (density) state on $A$ is a positive semidefinite operator $\rho_A$ with $\Tr\rho_A=1$;
for a composite system $AB$ we write $\rho_{AB}$ for a joint state and
$\rho_A=\Tr_B\rho_{AB}$ for the marginal.
To express Markov-type conditional independence in quantum terms, we use the conditional mutual
information (CMI) defined from von Neumann entropies, which are defined as
$S(A):=S(\rho_A):=-\Tr(\rho_A\log\rho_A)$.
We also write the conditional entropy as $S(A|B):=S(AB)-S(B)$.
For a tripartite state $\rho_{ABC}$, the CMI is
\begin{align}
I(A;C|B) := S(AB)+S(BC)-S(B)-S(ABC).
\end{align}
Strong subadditivity asserts $I(A;C|B)\ge 0$ for all $\rho_{ABC}$, and the equality case
$I(A;C|B)=0$ is known to be equivalent to $\rho_{ABC}$ being a \emph{quantum Markov state}
with a specific structural decomposition \cite{HJP}.
Motivated by the classical identity ``$A-B-C$ is Markov $\Leftrightarrow I(A;C|B)=0$'',
we say that $A-B-C$ forms a (quantum) Markov chain if $I(A;C|B)=0$.

The classical double Markovity phenomenon concerns the simultaneous validity of
$A-B-C$ and $A-C-B$.
In the classical case, this implies the existence of a common random variable $J$
that is a function of $B$ and also a function of $C$, such that
$A-J-(B,C)$ is Markov.
This statement is one of the key devices for converting equality in subadditivity-type
inequalities into an independence structure, and hence is tightly intertwined with SDR
arguments.
In the quantum case, however, the non-commutativity of observables means that one cannot
a priori expect a common \emph{function} of $B$ and $C$ in the same sense,
and it is unclear what structural statement replaces it.
Our first main result provides an exact quantum analogue:
we show that ``$A-B-C$ and $A-C-B$'' is equivalent to the existence of compatible
projective measurements on $B$ and $C$ (equivalently, a common block structure)
that induce a classical label $J$ for which
$A-J-(B,C)$ holds.

A second, equally important, variant of double Markovity appears in SDR-based proofs of
entropy power type inequalities, where one works with \emph{conditional} Markov relations
involving an auxiliary variable and requires a full-support assumption to rule out
degenerate cases.
We establish a quantum counterpart of this conditional double Markovity:
for a strictly positive joint state $\rho_{ABCD}$,
the pair of Markov relations $A-(B,D)-C$ and $A-(C,D)-B$ is equivalent to $A-D-(B,C)$.
The proof relies on a refinement of the structural decomposition of quantum Markov states,
including a uniqueness property of a minimal direct-sum decomposition under full support.

The main contributions of this work are as follows.
\begin{itemize}
\item \textbf{Quantum double Markovity.}
For a tripartite state $\rho_{ABC}$, we characterize the simultaneous Markov conditions
$A-B-C$ and $A-C-B$ in terms of a block-diagonal structure induced by PVMs on $B$ and $C$,
and we show that the resulting classical index $J$ renders $A-J-(B,C)$ Markov.
This is stated precisely in Theorem \ref{TH1}.

\item \textbf{Conditional quantum double Markovity under full support.}
For a strictly positive $\rho_{ABCD}$, we prove that
$A-(B,D)-C$ and $A-(C,D)-B$ hold if and only if $A-D-(B,C)$ holds.
This is stated in Theorem \ref{TH2}.
\end{itemize}
To prove the second result, we introduce the minimality of the decomposition introduced in 
\cite{HJP}.

The remainder of this paper is organized as follows.
In Section \ref{sec:SDR}, we review the SDR paradigm and explain how double Markovity enters
equality-case analyses; in particular, we present two classical versions of double Markovity
that are most frequently used in the literature.
Sections \ref{sec:quantumDM} and \ref{sec:quantumDM2} prove their quantum extensions, establishing \Cref{TH1,TH2}, respectively.
We conclude with a brief discussion of implications for extending SDR arguments to
quantum information theory.

\section{The ``subadditivity--doubling-rotation" technique and double Markovity}\label{sec:SDR}
We first recall the classical SDR template and two classical double-Markovity lemmas that will be extended to quantum systems.The idea of the subadditive--doubling rotation technique can be summarized as follows:
Let $f$ be some functional defined on probability distributions. Let $f(X)$ be a \textit{short-hand} notation for $f(p_X)$. Suppose $f$ has the following properties: 
\begin{itemize}
    \item (Strict) subadditivity: $f(X_1,X_2)\leq f(X_1)+f(X_2)$ with equality \textit{if and only if} $X_1\perp X_2$, where $\perp$ is reserved for (classical) independence when discussing classical random variables;
    \item Rotation invariant: $f(QX) = f(X)$, where $Q$ is a unitary matrix. A common running example is $X_+ = \frac{1}{\sqrt{2}}(X_1+X_2), X_- = \frac{1}{\sqrt 2}(X_1-X_2)$.
    \item $f$ is continuous with respect to weak convergence.
\end{itemize} 

The goal is to maximize $f(X)$ subject to $E[XX^T]\preceq K$. Suppose $p_{X^*}$ is a maximizer for $f$ with corresponding maximum value $v$, then take two independent copies of $X^*$, say $X_1$ and $X_2$. \\Note that
\begin{subequations}\label{subadditive}
\begin{align}
    2v &= f(X_1)+f(X_2)= f(X_1,X_2)\\
    &= f(X_+, X_-)\leq f(X_+)+f(X_-)\leq 2v,
\end{align}
\end{subequations}
thus, each inequality must have been an equality in the above derivation. Hence, we have $X_+\perp X_-$. Finally, by the Kac-Berstein's theorem \cite{kac39},\cite{Bernstein1941}, we conclude that $X_1,X_2$ are Gaussian distributed.

Double Markovity plays an important role in verifying the strict subadditivity of the functional $f$. This is because the ``only if'' direction of strict subadditivity is often not obvious. The derivation in (\ref{subadditive}) typically leads only to a certain Markov structure of $(X_+,X_-)$, rather than directly to their independence.

The following two lemmas, stated with proof, are two classical versions of double Markovity. The first is the standard formulation, while the second is the version used in the proof of the entropy power inequality in \cite{ajn19}. Our main contribution is to provide quantum extensions of both versions.

\begin{lemma}
Let \(A,B,C\) be three (real-valued) random variables defined on the same probability space, such that both the Markov chains \(A \to B \to C\) and \(A \to C \to B\) hold. Then  
\begin{enumerate}
\item There exist functions \(g_1(B)\) and \(g_2(C)\) such that \(g_1(B)=g_2(C)\) with probability one.
\item \(A \to g_1(B) \to (B,C)\) is Markov.
\end{enumerate}
\end{lemma}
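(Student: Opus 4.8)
The plan is to recast the two Markov hypotheses as equalities between conditional laws of $A$, and then to read off the common summary variable $J$ from a connectivity (block) structure on the joint support of $(B,C)$. For clarity I will first describe the argument in the discrete case and then indicate what is needed in general.

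\textbf{Step 1 (unpacking the hypotheses).} The chain $A\to B\to C$ means $A\perp C\mid B$, i.e.\ $p_{A\mid B=b}=p_{A\mid B=b,C=c}$ for every $(b,c)$ with $p_{BC}(b,c)>0$; likewise $A\to C\to B$ gives $p_{A\mid C=c}=p_{A\mid B=b,C=c}$ for every such $(b,c)$. Combining the two yields the key implication: whenever $p_{BC}(b,c)>0$ we have $p_{A\mid B=b}=p_{A\mid C=c}=p_{A\mid B=b,C=c}$.

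\textbf{Step 2 (the common variable).} Form the bipartite graph $\mathcal G$ on $\mathrm{supp}(p_B)\sqcup\mathrm{supp}(p_C)$ with an edge between $b$ and $c$ iff $p_{BC}(b,c)>0$, and let $\{V_j\}_{j\in\mathcal J}$ be its connected components. Each $b$ lies in exactly one $V_j$, so $g_1(b):=j$ for $b\in V_j$ is a well-defined function on $\mathrm{supp}(p_B)$, and similarly $g_2(c):=j$ for $c\in V_j$. Since edges never cross components, $(B,C)$ is supported on $\bigcup_j \big(V_j\cap\mathrm{supp}(p_B)\big)\times\big(V_j\cap\mathrm{supp}(p_C)\big)$, hence $g_1(B)=g_2(C)$ almost surely; this proves part (1). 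For part (2), use Step 1 along paths in $\mathcal G$: adjacent $b$ and $c$ force $p_{A\mid B=b}=p_{A\mid C=c}$, so $b\mapsto p_{A\mid B=b}$ is constant on each component $V_j$; call this common law $q_j$, and note $p_{A\mid C=c}=q_j$ for $c\in V_j$ as well. Therefore, for any $(b,c)$ with $p_{BC}(b,c)>0$ (necessarily lying in a single component $j$), $p_{A\mid B=b,C=c}=q_{g_1(b)}$, which depends on $(B,C)$ only through $g_1(B)$. Hence $A\perp(B,C)\mid g_1(B)$, i.e.\ $A\to g_1(B)\to(B,C)$ is Markov.

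\textbf{Main obstacle.} The combinatorial skeleton above is routine; the only real care is measure-theoretic, since $A,B,C$ are real-valued and the conditioning events may have probability zero. In the general case one works with regular conditional distributions $b\mapsto p_{A\mid B=b}$ and $c\mapsto p_{A\mid C=c}$ (which exist by disintegration on Polish spaces), and must produce a \emph{measurable} version of the component map $b\mapsto g_1(b)$, possibly handling uncountably many components. A clean way to bypass the graph combinatorics entirely is to take $g_1(B):=p_{A\mid B}$ and $g_2(C):=p_{A\mid C}$ as $\mathcal P(\mathbb R)$-valued random variables: Step 1 then gives directly that $g_1(B)=g_2(C)$ almost surely and that $p_{A\mid B,C}=g_1(B)$ almost surely, which yields both conclusions. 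I expect the measurable-selection bookkeeping to be the most delicate point, but it is standard; the conceptual content is entirely in the equality $p_{A\mid B=b}=p_{A\mid C=c}$ on the joint support and its propagation along the support graph.
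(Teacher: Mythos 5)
Your proposal is correct, and its core (Step 1: the two Markov conditions force $p_{A\mid B=b}=p_{A\mid C=c}=p_{A\mid B=b,C=c}$ on the support of $p_{BC}$) is exactly the paper's starting point. Where you differ is in how the common label $J$ is extracted. The paper partitions the values of $B$ into level sets of the map $b\mapsto F_{A\mid B=b}$ (i.e.\ $b\equiv b'$ iff they induce the same conditional law of $A$), does the same for $C$, and matches the two partitions via the equality of conditional laws; your primary construction instead takes connected components of the bipartite support graph of $(B,C)$. The two labelings can genuinely differ: your components give a (possibly strictly) finer partition, since two values of $b$ with identical conditional laws may sit in different components. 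Both choices satisfy the lemma, because in either case $p_{A\mid B,C}$ is constant on each cell. The trade-off is that the paper's construction makes measurability essentially automatic ($g_1(B)$ is just the regular conditional distribution $F_{A\mid B}$, viewed up to relabeling), whereas a measurable selection of connected components of a support graph over the reals is genuinely delicate, as you note. Your closing remark --- take $g_1(B):=p_{A\mid B}$ and $g_2(C):=p_{A\mid C}$ as measure-valued random variables --- is precisely the paper's proof, so you have in effect given both the paper's argument and a combinatorial variant of it; no gap.
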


\begin{proof}
Let \(F_{A\mid B=b}, F_{A\mid C=c}\) denote the (regular) conditional distributions of \(A\) conditioned on \(B=b\) and \(C=c\) respectively. We define an equivalence class according to \(b_1 \equiv b_2\) if the conditional distributions satisfy \(F_{A\mid B = b_1} = F_{A\mid B = b_2}\). This defines a partition of \(\mathcal B\). Denote the partitions by $\mathcal B_i$.

Now define \(g_1(b)\) by $b\mapsto i$ where $b\in \mathcal B_i$. Thus there is a bijection between \(g_1(b)\) and the conditional distributions \(F_{A\mid B=b}\). Let us similarly define an equivalence class and a partition of the values of \(C\) by $\mathcal C = \bigcup_j \mathcal C_j$. 

The Markov chains imply that \(F_{A\mid B,C} = F_{A\mid B}\) with probability one, and \(F_{A\mid B,C} = F_{A\mid C}\) with probability one. Therefore we have \(F_{A\mid B} = F_{A\mid C} = F_{A\mid B,C}\) with probability one. On this set define \(g_2(C)\) to take the same value as \(g_1(B)\). Clearly, by construction \(g_2(C) = g_1(B)\) with probability one. Further it is also clear by construction that \(F_{A\mid B,C,g_1(B)} = F_{A\mid B,C} = F_{A\mid B} = F_{A\mid g_1(B)}\) with probability one. Thus we also have \(A \to g_1(B) \to (B,C)\) is Markov.
\end{proof}

\begin{lemma}
    Let $A,B,C,D$ be (real-valued) random variables defined on the same probability space, such that 
    both Markov chains $A\to (B,D)\to C$ and $A\to (C,D)\to B$ hold. Suppose $p(a,b,c|d)$ has everywhere non-zero density. Then
$        A\to D\to (B,C)$  forms a Markov chain.
\end{lemma}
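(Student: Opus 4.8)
The plan is to condition on the value of $D$, which turns the four-variable claim into the three-variable double-Markovity lemma established above. The key structural observation is that $D$ sits on the conditioning side of \emph{both} Markov relations: $A\to(B,D)\to C$ means $p(c\mid a,b,d)=p(c\mid b,d)$, so for a fixed value $d$ of $D$, in the conditional model $p(\cdot\mid d)$ we have $A\to B\to C$; symmetrically, $A\to(C,D)\to B$ gives $A\to C\to B$ in that same conditional model. The hypothesis that $p(a,b,c\mid d)$ has everywhere positive density means this conditional model again has full support. Thus, for $p_D$-almost every $d$, the triple $(A,B,C)$ under $p(\cdot\mid d)$ satisfies the hypotheses of the preceding (classical) double-Markovity lemma.

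Applying that lemma inside the $d$-model, we obtain functions $g_1,g_2$ with $g_1(B)=g_2(C)$ almost surely under $p(\cdot\mid d)$ and with $A\to g_1(B)\to(B,C)$ Markov under $p(\cdot\mid d)$. The crux of the argument is then to use the full-support hypothesis to force this common summary variable to be \emph{trivial}. Since $p(b,c\mid d)>0$ everywhere, the event $\{g_1(B)\neq g_2(C)\}$, being $p(\cdot\mid d)$-null, is Lebesgue-null in the $(b,c)$-plane, so by Fubini, for almost every $b$ the identity $g_1(b)=g_2(c)$ holds for almost every $c$; two distinct values taken by $g_1$ on such ``good'' points $b$ would produce two disjoint full-measure subsets of the $c$-space, which is impossible. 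Hence $g_1(B)$ is almost surely a constant, and the Markov chain $A\to g_1(B)\to(B,C)$ degenerates to $A\perp(B,C)$ under $p(\cdot\mid d)$, i.e.\ $p(a,b,c\mid d)=p(a\mid d)\,p(b,c\mid d)$.

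Finally, since this holds for $p_D$-almost every $d$, multiplying by $p(d)$ gives $p(a,b,c,d)=p(a\mid d)\,p(b,c\mid d)\,p(d)$, which is precisely the statement that $A\to D\to(B,C)$ is a Markov chain. The main obstacle, and the only place the full-support hypothesis is essential, is the middle step: without positivity of $p(a,b,c\mid d)$, the variables $B$ and $C$ could genuinely share a non-constant common label and the conclusion would fail. (One can also skip the explicit appeal to the preceding lemma and run the Fubini argument directly on $p(a\mid b,d)=p(a\mid b,c,d)=p(a\mid c,d)$ to get $p(a\mid b,d)=p(a\mid d)$, hence $A\perp(B,C)\mid D$; this is just the proof of the preceding lemma specialized to the positive-density case.) The remaining measure-theoretic bookkeeping—regular conditional distributions and the precise meaning of ``almost every $d$''—is routine and handled exactly as in the preceding lemma.
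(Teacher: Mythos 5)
Your proof is correct, but your main route differs from the paper's. The paper's proof is a one-line identity chain: the two Markov conditions give $p(a\mid b,d)=p(a\mid b,c,d)=p(a\mid c,d)$ for \emph{all} $(a,b,c,d)$ (this is where full support is used), so the left-hand side cannot depend on $b$ and the right-hand side cannot depend on $c$; averaging over $c$ then yields $p(a\mid b,c,d)=p(a\mid d)$ directly. You instead condition on $D=d$, invoke the preceding three-variable double-Markovity lemma to produce the common summary $g_1(B)=g_2(C)$, and then use a Fubini argument on the full-support hypothesis to show that any such common label must be constant, which collapses $A\to g_1(B)\to(B,C)$ to $A\perp(B,C)$ given $D=d$. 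Both arguments are sound; yours is longer but more structural, and it makes explicit \emph{why} positivity is essential (it forbids a nonconstant shared label between $B$ and $C$), whereas the paper's computation hides this in the quantifier ``for any $a,b,c,d$.'' Your parenthetical remark --- running the identity $p(a\mid b,d)=p(a\mid b,c,d)=p(a\mid c,d)$ directly --- is essentially the paper's proof, so you have in effect given both arguments. The only point to tighten in your main route is the measure-theoretic phrasing: ``Lebesgue-null'' should be ``null for the conditional law $p(b,c\mid d)$,'' which by everywhere-positivity of the density is equivalent to Lebesgue-null on the relevant product space; as you say, this bookkeeping is routine.
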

\begin{proof}
Both Markov chains $A\to (B,D)\to C$ and $A\to (C,D)\to B$
yield the relaiton $p(a|b,d) = p(a|b,c,d) = p(a|c,d)$ for any $a,b,c,d$, which implies
    \begin{align*}
        &p(a|b,c,d) = p(a|b,d)\\
        =& \sum_c p(c)p(a|b,d)
        = \sum_c p(c)p(a|c,d)
        = p(a|d).
    \end{align*}
\end{proof}

\section{Double Markovity for quantum systems}\label{sec:quantumDM}
Let $A,B,C$ be quantum systems with joint state $\rho_{ABC}$. We say that $A- B- C$ forms a Markov chain if $I(A;C|B) = 0$. 

\begin{theorem}\label{TH1}
% $A,B,C$ are quantum systems.
% The state $ \rho_{ABC}$ is a joint state on 
% quantum systems $A,B,C$.
The following statements are equivalent:
\begin{enumerate}[(i)]
    \item The Markov chains $A-B-C$ and $A-C-B$ hold simultaneously;
    \item There exist PVMs (projective-valued measures) $\{E_{B,j}\}_j$ on $\mathcal{H}_B$ and $\{E_{C,j}\}_j$ on $\mathcal{H}_C$ such that:
  \begin{equation}
    \sum_{j} (I_A \otimes E_{B,j} \otimes E_{C,j})\rho_{ABC}(I_A \otimes E_{B,j} \otimes E_{C,j}) = \rho_{ABC},
    \label{eq:block-diag}
  \end{equation}
  and if we define
  \begin{align}
    \rho_{ABCJ} 
    := &\sum_j \Bigl((I_A \otimes E_{B,j} \otimes E_{C,j})\rho_{ABC}\notag\\
    &(I_A \otimes E_{B,j} \otimes E_{C,j})  
    \otimes |j\rangle\langle j|\Bigr),
    \label{eq:rho-ABCJ}
  \end{align}
  then the Markov chain $A-J-(B,C)$ holds, where $J$ is the classical index from the PVM measurement.
\end{enumerate}
\end{theorem}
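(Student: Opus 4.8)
The plan is to prove the two implications separately: $(ii)\Rightarrow(i)$ is an entropy–bookkeeping argument, whereas $(i)\Rightarrow(ii)$ rests on the structure theorem for quantum Markov states \cite{HJP}. For $(ii)\Rightarrow(i)$, observe that \eqref{eq:block-diag} says $\rho_{ABC}$ is block-diagonal with respect to the orthogonal projections $\Pi_j:=I_A\otimes E_{B,j}\otimes E_{C,j}$, so $\rho_{ABC}=\bigoplus_j\sigma_j$ with $\sigma_j$ supported on $\mathcal H_A\otimes E_{B,j}\mathcal H_B\otimes E_{C,j}\mathcal H_C$. Since the subspaces $E_{B,j}\mathcal H_B$ (and likewise $E_{C,j}\mathcal H_C$) are mutually orthogonal, the marginals $\rho_{AB},\rho_{BC},\rho_B,\rho_{AC},\rho_C$ are block-diagonal too, so every von Neumann entropy that occurs below splits as a Shannon term $H(p)$ plus $\sum_j p_j$ times the corresponding entropy of the normalized block $\hat\sigma_j:=\sigma_j/p_j$, where $p_j:=\Tr\sigma_j$. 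A direct computation then yields $I(A;BC|J)=\sum_j p_j\,I(A;BC)_{\hat\sigma_j}$, $I(A;C|B)=\sum_j p_j\,I(A;C|B)_{\hat\sigma_j}$, and similarly for $I(A;B|C)$. The hypothesis that $A-J-(B,C)$ is Markov forces each $\hat\sigma_j$ to factor across $A$ versus $BC$, which annihilates every summand on the right, so $I(A;C|B)=I(A;B|C)=0$.

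For $(i)\Rightarrow(ii)$ I would first apply the structure theorem to $A-B-C$: there is a decomposition $\mathcal H_B=\bigoplus_k\mathcal H_{B_k^L}\otimes\mathcal H_{B_k^R}$ with $\rho_{ABC}=\bigoplus_k q_k\,\rho_{AB_k^L}\otimes\rho_{B_k^R C}$, where we may assume $q_k>0$. The key step is to use the \emph{second} relation $I(A;B|C)=0$ to collapse each $\rho_{AB_k^L}$ to a product $\rho_A^{(k)}\otimes\rho_{B_k^L}$: applying to $B$ the channel that records the block index $k$ and then discards the $B_k^R$ factor, monotonicity of conditional mutual information turns $I(A;B|C)=0$ into $0=\sum_k q_k\,I(A;B_k^L)_{\rho_{AB_k^L}}$, so each term vanishes. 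Hence $\rho_{ABC}=\bigoplus_k q_k\,\rho_A^{(k)}\otimes\omega_{B_k C}$ with $\omega_{B_k C}:=\rho_{B_k^L}\otimes\rho_{B_k^R C}$; in particular $\rho_{ABC}$ is invariant under the pinching map associated with the block projections $\{P_k^B\}$ on $\mathcal H_B$, and $A-K-(B,C)$ is Markov for the classical block label $K$. The symmetric argument (structure theorem for $A-C-B$, then $I(A;C|B)=0$) produces a PVM $\{Q_l^C\}$ on $\mathcal H_C$ with $\rho_{ABC}=\bigoplus_l r_l\,\sigma_A^{(l)}\otimes\mu_{C_l B}$ and $A-L-(B,C)$ Markov.

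It then remains to reconcile the two index sets. Since $\{P_k^B\}$ and $\{Q_l^C\}$ commute and $\rho_{ABC}$ commutes with both families, $\rho_{ABC}$ is block-diagonal for the joint PVM $\{P_k^B\otimes Q_l^C\}$; write $\rho_{ABC}^{(kl)}$ for its blocks and $s_{kl}:=\Tr\rho_{ABC}^{(kl)}$. Comparing the $B$- and $C$-factorizations of $\rho_{ABC}$ on any block with $s_{kl}>0$ shows that the $A$-marginal of $\rho_{ABC}^{(kl)}/s_{kl}$ equals both $\rho_A^{(k)}$ and $\sigma_A^{(l)}$, hence $\rho_A^{(k)}=\sigma_A^{(l)}$ on that edge. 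Now form the bipartite graph on the disjoint union of the $B$-labels and the $C$-labels with an edge $k\sim l$ exactly when $s_{kl}>0$, let $J$ be the connected-component map (no label is isolated, since the $q_k$ and $r_l$ are positive, so the resulting $E_{B,j}:=\sum_{k\in j}P_k^B$ and $E_{C,j}:=\sum_{l\in j}Q_l^C$ are PVMs on $\mathcal H_B$ and $\mathcal H_C$ indexed by the common label set of components). Every nonzero block $\rho_{ABC}^{(kl)}$ has $J(k)=J(l)$, which gives \eqref{eq:block-diag}; and within a fixed component all the $\rho_A^{(k)}$ (equivalently all the $\sigma_A^{(l)}$) coincide by connectedness, so each block of $\rho_{ABCJ}$ in \eqref{eq:rho-ABCJ} is a product across $A$ versus $BC$, i.e.\ $A-J-(B,C)$ is Markov. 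The step I expect to be the main obstacle is the collapse of $\rho_{AB_k^L}$ to a product via data processing (the precise quantum substitute for ``$A$ is independent of $B$ within a block''), together with the connected-component bookkeeping, which is the quantum analogue of the classical equivalence-class argument that merges values of $B$ and of $C$ inducing the same conditional law; the entropy computations themselves are routine.
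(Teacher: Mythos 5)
Your proof is correct, and its skeleton coincides with the paper's: the reverse implication is the same entropy bookkeeping over orthogonal blocks, and the forward implication applies the structure theorem of \cite{HJP} to each Markov chain and then fuses the two families of block projectors into a common classical label $J$. Where you diverge is in the two technical steps in the middle, and both of your substitutes are sound. To decouple $A$ inside each block, the paper superimposes the two decompositions \eqref{C1} and \eqref{C2} on every joint block $(k,l)$ and extracts the four-fold product $\rho_{A|(k,l)}\otimes\rho_{B_1|(k,l)}\otimes\rho_{C_1|(k,l)}\otimes\rho_{B_2,C_2|(k,l)}$; you instead keep only the $B$-side decomposition and push the second hypothesis $I(A;B|C)=0$ through the channel that records the block index and discards $B_k^R$, so that the chain rule and nonnegativity of conditional mutual information yield $\sum_k q_k\, I(A;B_k^L)=0$ and hence $\rho_{AB_k^L}=\rho_A^{(k)}\otimes\rho_{B_k^L}$ --- a slightly more economical route that never needs the joint $(k,l)$ refinement of the state. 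To merge the labels, the paper reduces to the classical double-Markovity lemma applied to the classical triple $(A,K,L)$, grouping indices by equality of the conditional states $\rho_{A|k}$, whereas you take connected components of the bipartite support graph on pairs with $s_{kl}>0$; your $J$ is a priori finer, but both choices satisfy \eqref{eq:block-diag} and render $A-J-(B,C)$ Markov, since $\rho_A^{(k)}=\sigma_A^{(l)}$ along every edge and hence throughout each component. The two proofs therefore differ in implementation rather than in strategy.
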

\begin{proof}
\noindent{\bf Step 1:}
We first note that $(ii)\implies (i)$ is immediate. Suppose $(ii)$ is true. 
In (ii), the block-diagonal condition with respect to the PVM $\{E_{B,j}\}_j$ implies that
the conditional states $\rho_B^{(j)}$ have pairwise orthogonal supports. Hence $J$ can be
recovered from $B$ with zero error, i.e., $S(J|B)=0$ (and similarly $S(J|C)=0$), and thus
$I(A;C|B)=I(A;C|B,J)=0$.

\if0
We first note that $(ii)\implies (i)$ is immediate. Suppose $(ii)$ is true. 
The structure of $J$ yields $S(J|B) =S(J|C) = 0$. This is because $J$ is a function of the PVM outcomes on $B$ and $C$; once we know the full state of either $B$ or $C$, the value of $J$ is determined. Thus,
\begin{align*}
    I(A;C|B) = I(A;C|B,J) \leq I(A;B,C|J)=0,
\end{align*}
implying $I(A;C|B)=0$. By interchanging the role of $B$ and $C$, one has $I(A;B|C)=0$.
\fi

\noindent{\bf Step 2:}
For the reverse direction ($(i) \implies (ii)$), suppose $A-B-C$ and $A-C-B$ hold. 
We first prove a special case when $B$ and $C$ are classical random variables. 
Note that the state $\rho_{ABC}$ can be written as
$\sum_{b\in {\cal B},c\in{\cal C}}P_{B,C}(b,c) \rho_{A|b,c}\otimes |b,c\rangle \langle b,c|$.

Consider a pair $(b,c)$ such that $P_{B,C}(b,c)>0$. 
The Markov chains $A-B-C$ and $A-C-B$ imply
$\rho_{A|b,c}$ that depends only on either $b$ or $c$.
That is, 
$\rho_{A|B=b}=\rho_{A|C=c}=\rho_{A|b,c}$. Based on this observation, we divide the set 
${\cal B}$ into disjoint subsets ${\cal B}_j$ with $\cup_j  {\cal B}_j={\cal B}$ such that 
$b$ and $b'$ belong to the same subset if and only if
$\rho_{A|B=b}=\rho_{A|B=b'} $.
Set ${\cal C}$ is also divided into disjoint subsets ${\cal C}_j$ in the same way. Additionally, we require the label $j$ for the subsets ${\cal C}_j$ satisfy $\rho_{A|B=b}=\rho_{A|C=c} $
for $b\in {\cal B}_j$ and $c\in {\cal C}_j$. 
Therefore, conditioned on that $B = b \in {\cal B}_j$, we must have $C = c \in \mathcal C_j$ with probability one, and vice versa.

We define the random variable $J$ to indicate the label of subset ${\cal B}_j$ that
contains the variable $B$, i.e., put $J = j$ whenever $B = b\in \mathcal B_j$.
Following from this construction, once the variable $J$ is fixed to be $j$,
the state $\rho_{A|b,c}$ does not depend on the choice of $b,c$
in ${\cal B}_j\times {\cal C}_j$. This means $A$ is conditionally independent of the pair $(B,C)$ given the classical label $J$. Hence, we have $I(A;B,C|J)=0$. We denote the function $b\mapsto j$ and $c\mapsto j$ by $g_1$ and $g_2$.

\noindent{\bf Step 3:}
Next, we prove the condition (ii) when $B$ and $C$ are quantum systems. 
when $A-B-C$ and $A-C-B$ hold, i.e., the condition (i) holds.

Due to Theorem 6 of \cite{HJP},
${\cal H}_B$ and ${\cal H}_C$ are decomposed as
\begin{align}
{\cal H}_B =\bigoplus_k  {\cal H}_{B_1,k} \otimes {\cal H}_{B_2,k} ,~
{\cal H}_C =\bigoplus_l  {\cal H}_{C_1,l} \otimes {\cal H}_{C_2,l} ,
\end{align}
and the state $\rho_{ABC}$ has the block structures
\begin{align}
\rho_{ABC}&=\bigoplus_k \rho_{AB_1|k} \otimes \rho_{CB_2|k} \label{C1}\\
\rho_{ABC}&=\bigoplus_l \rho_{AC_1|l} \otimes \rho_{BC_2|l} ,\label{C2}
\end{align}
where $\rho_{AB_1|k} $, $\rho_{CB_2|k} $,
$\rho_{AC_1|l}$, $ \rho_{BC_2|l}$ are states on 
${\cal H}_A \otimes  {\cal H}_{B_1,k} $, \
${\cal H}_C\otimes {\cal H}_{B_2,k}$, \
${\cal H}_A\otimes {\cal H}_{C_1,l}$, \
${\cal H}_B \otimes {\cal H}_{C_2,l}$, and
$\oplus$ denotes orthogonal direct sums of Hilbert spaces.

Define projectors:
\begin{align}
  M_{B,k} &:= I_A \otimes P_{\mathcal{H}_{B_1,k} \otimes \mathcal{H}_{B_2,k}} \otimes I_C, \\
  M_{C,l} &:= I_A \otimes I_B \otimes P_{\mathcal{H}_{C_1,l} \otimes \mathcal{H}_{C_2,l}},
\end{align}
 $P_{\mathcal{H}}$ denotes the orthogonal projection onto the subspace $\mathcal{H}$.
We define the random variable 
$K$ and $L$
by the joint distribution 
$P_{K,L}(k,l):=\Tr (I_A \otimes M_{B,k}\otimes M_{C,l} \ \rho_{ABC})$.
Introduce an auxiliary classical-quantum state: 
$$\rho_{AKL}
:=\sum_{k,l}
(\Tr_{BC} I_A \otimes M_{B,k}\otimes M_{C,l} \rho_{ABC}) \otimes
|k,l\rangle \langle k,l|.$$
The relations \eqref{C1} and \eqref{C2} 
imply 
the Markov chains $A-K-L$ and $A-L-K$, respectively.

Now, we fix the variables $K$ and $L$ to be $k$ and $l$ and consider the state $\rho_{ABC|K=k,l=l}$
living in space ${\cal H}_A\otimes {\cal H}_{B_1,k} \otimes {\cal H}_{B_2,k}
\otimes {\cal H}_{C_1,l} \otimes {\cal H}_{C_2,l}$.
Let $\cdot |k$ denote the system restricted by the projector $M_{B,k}$. 
The original Markov chain condition implies 
the Markov chains $A|k-(B_1,B_2)|k-(C_1, C_2)|k$
and $A|l-(C_1, C_2)|l-(B_1,B_2)|l$.
The decomposition \eqref{C1} implies the relation $A|k,B_1|k \perp B_2|k,C_1|k, C_2|k$
and 
the form \eqref{C2} implies the relation $A|l,C_1|l \perp C_2|l,B_1|l, B_2|l$.
Thus, we find that
$A|(k,l)$, $B_1|(k,l)$, $C_1|(k,l)$, $(B_2,C_2)|(k,l)$ are independent of each other.
That is, 
the state $\rho_{ABC|K=k,L=l}$ has the form
$\rho_{A|(k,l)}\otimes \rho_{B_1|(k,l)}\otimes \rho_{C_1|(k,l)}
\otimes \rho_{B_2,C_2,k,l}$.

Apply the result from the case when $(B, C)$ are classical random variables to the system 
$AKL$, and define the random variable $J$ 
and the functions $g_1$ and $g_2$
in the way as before, i.e., $J:= g_1(K) = g_2(L)$. 
Therefore, the PVMs $\{E_{B,j}\}_j$ and $\{E_{C,j}\}_j$ satisfies \eqref{eq:block-diag}
while their each block with the index $j$ is given as
\begin{align}
  E_{B,j}& := \sum_{k: g_1(k)=j} P_{\mathcal{H}_{B_1,k} \otimes \mathcal{H}_{B_2,k}}, \\
  E_{C,j} &:= \sum_{l: g_2(l)=j} P_{\mathcal{H}_{C_1,l} \otimes \mathcal{H}_{C_2,l}}.
\end{align}

Now, conditioned on $J = j$, since $\rho_{A|k,l} = \rho_{A|j}$, we sum over all $(k,l)$ pairs such that $g_1(k) = g_2(l) = j$ to obtain:
\begin{align}
  &\rho_{ABC|J=j} 
  = \sum_{\substack{k,l: \\ g_1(k)=g_2(l)=j}} P_{K,L|J=j}(k,l) \rho_{ABC|K=k, L=l}  \notag \\
  =& \sum_{\substack{k,l: \\ g_1(k)=g_2(l)=j}} P_{K,L|J=j}(k,l)\, \Big(\rho_{A|j} \otimes \rho_{B_1|(k,l)}  \notag\\
 &\qquad \otimes \rho_{C_1|(k,l)} \otimes \rho_{B_2,C_2|(k,l)}\Big) \notag\\
  =& \rho_{A|j} \otimes \Bigg(\sum_{\substack{k,l: \\ g_1(k)=g_2(l)=j}} P_{K,L|J=j}(k,l)\, \Big( \rho_{B_1|(k,l)} \notag\\
   &\qquad \otimes \rho_{C_1|(k,l)} \otimes \rho_{B_2,C_2|(k,l)} \Big)\Bigg).
\end{align}
Hence, we conclude that
$A|j \perp (B_1,B_2,C_1,C_2)|j$, in other words, $I(A;BC|J)=0$, which implies the condition (ii).

% We have
% \begin{align}
% \rho_{ABC|K=k}=
% \sum_{l}P_{L|K=k}(l)
% \rho_{A,k,l}\otimes \rho_{B1,k,l}\otimes \rho_{C1,k,l}
% \otimes \rho_{B2,C2,k,l}.
% \end{align}
% The form \eqref{C1} implies that 
% the state $\rho_{A,k,l}\otimes \rho_{B1,k,l}$ does not depend on $l$.
% Hence, we denote it as
% $\rho_{A,1,k}\otimes \rho_{B1,k}$ such that 
% Similarly,
% due to the form \eqref{C2},
% the state $\rho_{A,k,l}\otimes \rho_{C1,k,l}$ does not depend on $k$.
% Hence, we denote it as
% $\rho_{A,2,l}\otimes \rho_{C1,l}$.
% When $P_{K,L}(k,l) >0$,
% we have $g_1(k)=g_2(l)$, which is denoted by $j$.
% In this case, $\rho_{A,1,k}=\rho_{A,2,l}$, which is denoted by $\rho_{A,j}$.

% Then, we have
% \begin{align}
% \rho_{ABC|J=j}= 
% \sum_{k,l}P_{KL|J=j}(k,l)
% \rho_{A,j}\otimes \rho_{B1,k}\otimes \rho_{C1,l}
% \otimes \rho_{B2,C2,k,l}.
% \end{align}
% Once $J$ is fixed to be $j$, we have
% $A \perp (B1,B2,C1,C2)$.
% Therefore, we have $I(A;BC|J)=0$.

\end{proof}

\section{Conditional double Markovity for quantum states with full support}\label{sec:quantumDM2}
Theorem \ref{TH1} is generalized 
with the additional system $D$ as follows
when the full support condition is imposed.

\begin{theorem}\label{TH2}
Let $A,B,C,D$ be quantum systems with \textbf{strictly positive} joint state $\rho_{ABCD}$. The following two Markov structures are equivalent:
\begin{enumerate}[(i)]
    \item $A-(B,D)-C$ and $A-(C,D)-B$;
    \item $A-D-(B,C)$.
\end{enumerate}
\end{theorem}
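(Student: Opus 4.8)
The plan is to prove the equivalence in both directions, with the bulk of the work in the direction $(i)\Rightarrow(ii)$.

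\medskip

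\noindent\textbf{The easy direction $(ii)\Rightarrow(i)$.} If $A-D-(B,C)$ holds, then $\rho_{ABCD}$ is a Markov state and by the structure theorem of \cite{HJP}, $\mathcal H_D$ decomposes as $\bigoplus_k \mathcal H_{D_1,k}\otimes\mathcal H_{D_2,k}$ with $\rho_{ABCD}=\bigoplus_k \rho_{AD_1|k}\otimes\rho_{BCD_2|k}$. Since the full-support assumption is preserved under tracing out, each $\rho_{AD_1|k}$ and $\rho_{BCD_2|k}$ is strictly positive. From this explicit form, conditioning on $D$ makes $A$ independent of $(B,C)$, hence in particular $A$ is conditionally independent of $C$ given $(B,D)$ and of $B$ given $(C,D)$; I would spell this out by computing that $I(A;C|BD)$ and $I(A;B|CD)$ both vanish, using that the relevant marginals inherit the block form.

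\medskip

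\noindent\textbf{The hard direction $(i)\Rightarrow(ii)$.} Here I want to apply Theorem~\ref{TH1} with the roles of the three systems played by $A$, $B$, and $C$, but \emph{conditioned on} $D$. The idea: for the strictly positive state $\rho_{ABCD}$, think of $D$ as a (generally quantum) side system; the relations $A-(B,D)-C$ and $A-(C,D)-B$ say that, after ``attaching $D$ to both $B$ and to $C$,'' we have the double-Markov situation of Theorem~\ref{TH1} for the tripartite split $A\,|\,BD\,|\,CD$. But $B$ and $C$ share the factor $D$, so the common label $J$ produced by Theorem~\ref{TH1} — which is simultaneously a function of a measurement on $BD$ and on $CD$ — should be extractable from $D$ alone. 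To make this precise I would first invoke Theorem~\ref{TH1} (applied to the systems $A$, $\tilde B:=BD$, $\tilde C:=CD$, noting that these Hilbert spaces are not independent but the theorem only needs the three abstract systems and their joint state, which here is $\rho_{A\tilde B\tilde C}$ obtained by copying $D$ — wait, this needs care since $D$ is not duplicated). The cleaner route, and the one I would actually pursue: use the structure theorem of \cite{HJP} directly on the two Markov chains. From $A-(B,D)-C$ we get $\mathcal H_{BD}=\bigoplus_k \mathcal H_{X_1,k}\otimes\mathcal H_{X_2,k}$ with $\rho_{ABCD}=\bigoplus_k \rho_{AX_1|k}\otimes\rho_{CX_2|k}$, and symmetrically from $A-(C,D)-B$ a decomposition of $\mathcal H_{CD}$. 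The strict positivity of $\rho_{ABCD}$ forces the decomposition of $\mathcal H_{BD}$ to be ``aligned with $D$'': because $\rho_{ABCD}>0$, the reduced state $\rho_{BD}>0$, and the block projectors of the first decomposition commute with the structure coming from the second; combining the two block structures and using positivity, I expect to show that the common refinement of the two block decompositions is labelled by a variable $K$ that is a function of a measurement on $D$ alone (this is where the minimality/uniqueness of the \cite{HJP} decomposition under full support, promised in the introduction, does the work). Once $K$ is shown to be recoverable from $D$ with zero error, the block form gives $A-K-(B,C)$, and since $K$ is coarser than $D$, this upgrades to $A-D-(B,C)$.

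\medskip

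\noindent\textbf{Main obstacle.} The crux is the alignment step: showing that the direct-sum decomposition of $\mathcal H_{BD}$ furnished by the first Markov chain, when intersected with the one from the second Markov chain, refines to a decomposition whose label lives on $D$. Without full support this is false (one can hide the label in correlations between $B$ and $D$), so the strict-positivity hypothesis must be used essentially — presumably via a uniqueness statement for the minimal \cite{HJP} decomposition, asserting that for a strictly positive Markov state the central decomposition is canonical and hence compatible decompositions must share a common coarsening carried by the shared subsystem $D$. Proving that uniqueness lemma, and then the commutation/compatibility argument that pins the label to $D$, is the technical heart; the rest is bookkeeping with tensor factorizations and the additivity of entropy across direct sums.
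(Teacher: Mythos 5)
Your plan for $(i)\Rightarrow(ii)$ is essentially the paper's proof: apply the structure theorem of \cite{HJP} to both Markov chains, pass to \emph{minimal} decompositions, and use a uniqueness lemma for the minimal decomposition of a strictly positive state to identify the two block structures and force the block label onto $\mathcal H_D$ --- the uniqueness lemma you defer is exactly the lemma the paper proves in its section on minimal decompositions, and your alignment step is carried out there by matching $\mathcal H_{BD_1,k}$ with $\mathcal H_{CD_1,l(k)}$ and observing that a subspace containing no part of $\mathcal H_B$ and no part of $\mathcal H_C$ must sit inside $\mathcal H_D$. For $(ii)\Rightarrow(i)$ the paper uses the one-line data-processing bound $I(A;C|BD)\le I(A;BC|D)=0$ (and symmetrically), so your detour through the structure theorem, while correct, is unnecessary.
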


\subsection{Minimal decomposition}\label{SG}
The direct-sum/tensor-product decomposition used throughout this paper is the one appearing in
the structure theorem for quantum Markov states in \cite{HJP}.
For our purposes, we introduce a \emph{minimal} version of such a decomposition tailored to a
single strictly positive bipartite state $\rho_{XY}$.
In particular, we formulate an explicit minimality criterion and prove a uniqueness property
under full support, which will be a key ingredient in the proof of Theorem~\ref{TH2}.
%\subsubsection{Minimal decomposition}\label{SG}We consider two systems $X$ and $Y$, and a strictly positive state $\rho_{XY}$ on the joint system $XY$.
The system ${\cal H}_Y$ can be decomposed as
\begin{align}
{\cal H}_Y &=\bigoplus_k  {\cal H}_{Y_1|k} \otimes {\cal H}_{Y_2|k} \label{TS1}.
\end{align}
The state $\rho_{XY}$ has the form
\begin{align}
\rho_{XY}&=\bigoplus_k \rho_{X,Y_1|k} \otimes \rho_{Y_2|k} \label{TS2}
\end{align}
where $\rho_{X,Y_1|k} $, $\rho_{Y_2|k} $ are 
strictly positive matrices on 
${\cal H}_X\otimes {\cal H}_{Y_1|k}$ and $ {\cal H}_{Y_2|k}$.

We say that the above decomposition \eqref{TS1}
is minimal with respect to $X$ and $\rho_{XY}$
when the following conditions hold.
\begin{enumerate}[(i)]
    \item There is no strict subspace of ${\cal H}_{Y_1|k}$ that can be served as a component of a decomposition.
    \item There is no pair $k,k'$ such that 
$\rho_{X,Y_1|k}$ and $\rho_{X,Y_1|k'}$ are related via a unitary transformation.
\end{enumerate}
A minimal decomposition exists because
\begin{enumerate}[(i)]
    \item If $\rho_{X,Y_1|k}$ can be written as direct sum $\bigoplus_l \rho_{X,Y_1|k,l}$, we obtain a refinement by replacing the original block by direct sum of smaller blocks $\oplus_l \rho_{X,Y_1|k,l} \otimes \rho_{Y_2|k} $.
    \item If there is a pair $k,k'$ in the sense of (2), combining ${\cal H}_{Y2|k}$ and 
${\cal H}_{Y2|k'}$ via direct sum removes such a pair.
\end{enumerate}

In general, a minimal decomposition is not unique. If the support of $\rho_{Y_2|k} $ is strictly smaller than $ {\cal H}_{Y_2|k}$,
we focus on the kernel $\ker(\rho_{Y_2|k} )$, which is its orthogonal subspace to the support $\operatorname{supp} (\rho_{Y2|k} )$
in ${\cal H}_{Y2|k}$.

The subspace $\mathcal{H}_{Y_1|k} \otimes \ker(\rho_{Y_2|k})$ carries no weight in the state $\rho_{XY}$, so it could be reassigned to a different block $k'$, without harming the decomposition. That is:
\[
  \mathcal{H}_{Y_1|k} \otimes \ker(\rho_{Y_2|k}) \cong \mathcal{H}_{Y_1|k'} \otimes \ker(\rho_{Y_2|k})
\]
for any $k' \neq k$, without changing the state. This ambiguity shows that when $\rho_{Y_2|k}$ is not full rank, the minimal decomposition is not unique.

% That is, 
% ${\cal H}_{Y2|k}=\operatorname{supp} (\rho_{Y2|k} ) \oplus \ker(\rho_{Y2|k} )$, i.e., 
% ${\cal H}_{Y1|k} \otimes {\cal H}_{Y2|k}=
% ({\cal H}_{Y1|k} \otimes \operatorname{supp} (\rho_{Y2|k} )) \oplus 
% ( {\cal H}_{Y1|k} \otimes \ker(\rho_{Y2|k} ))$.
% Then, the subspace 
% $ {\cal H}_{Y1|k} \otimes \ker(\rho_{Y2|k} )$ 
% of ${\cal H}_{Y1|k} \otimes {\cal H}_{Y2|k}$
% can be considered as 
% $ {\cal H}_{Y1|k'} \otimes \ker(\rho_{Y2|k} )$ 
% with $k\neq k'$ 
% even when the minimality condition holds
% while
% $ {\cal H}_{Y1|k'} \otimes \ker(\rho_{Y2|k} )$ 
% can be considered as a subspace of 
% ${\cal H}_{Y1|k} \otimes ({\cal H}_{Y2|k'} \oplus
% \ker(\rho_{Y2|k} ))$.

% This replacement shows the non-uniqueness of the decomposition 
% \eqref{TS1} even under the minimality condition with respect to $X$ and $\rho_{XY}$.

In contrast, when the state $\rho_{XY}$ on the joint system $XY$
has full support, there uniquely exists a minimal decomposition with respect to $X$ and 
a strictly positive density matrix $\rho_{XY}$.
% for any subspcace ${\cal K}\subset {\cal H}_{Y2|k}$,
% $ {\cal H}_{Y1|k} \otimes {\cal K}$ cannot be written as $ {\cal H}_{Y1|k'} \otimes {\cal K}$ with $k\neq k'$ 
% when the minimality condition holds.
% Therefore, there uniquely exists a minimal decomposition with respect to $X$ and 
% a strictly positive density matrix $\rho_{XY}$.

\begin{lemma}
    Let $\rho_{XY}$ be a strictly positive joint state. Suppose $\rho_{XY}$ admits two minimal decompositions:
\begin{align}
  \mathcal{H}_Y &= \bigoplus_k \mathcal{H}_{Y_1|k} \otimes \mathcal{H}_{Y_2|k}, \quad
  \rho_{XY} = \bigoplus_k \rho_{X,Y_1|k} \otimes \rho_{Y_2|k}, \label{eq:decomp1}\\
  \mathcal{H}_Y &= \bigoplus_{k'} \mathcal{H}'_{Y_1|k'} \otimes \mathcal{H}'_{Y_2|k'}, \quad
  \rho_{XY} = \bigoplus_{k'} \rho'_{X,Y_1|k'} \otimes \rho'_{Y_2|k'}. \label{eq:decomp2}
\end{align}
Then the two decompositions are equivalent up to reordering and unitary equivalence. That is, there exists a bijection $\phi: \{k\} \to \{k'\}$ and unitaries $U_k$ such that
\[
 U_k \rho_{X,Y_1|k} \otimes \rho_{Y_2|k}U^\dagger_k = \rho'_{X,Y_1|\phi(k)} \otimes \rho'_{Y_2|\phi(k)}.
\]
\end{lemma}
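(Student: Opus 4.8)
The plan is to prove uniqueness of the minimal decomposition for a strictly positive $\rho_{XY}$ by reducing it to a standard fact about the structure of commutants / centers of von Neumann algebras. First I would extract from the decomposition \eqref{TS2} the algebra $\mathcal{A}$ on $\mathcal{H}_Y$ generated by the block projections together with the operators acting only on the $\mathcal{H}_{Y_2|k}$ factors; equivalently, $\mathcal{A}$ is the algebra of all operators of the form $\bigoplus_k I_{Y_1|k}\otimes M_{Y_2|k}$. The key observation is that, because $\rho_{XY}$ is strictly positive, $\mathcal{A}$ is canonically determined by $\rho_{XY}$ and $X$: it is exactly the largest subalgebra $\mathcal{B}\subseteq \mathcal{B}(\mathcal{H}_Y)$ such that $\rho_{XY}$ is, in the appropriate sense, ``constant along $\mathcal{B}$'' — more precisely, the algebra generated by all $T_Y$ such that $(I_X\otimes T_Y)$ commutes with $\rho_{XY}$ modulo the block structure on the $X,Y_1$ side. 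Once $\mathcal{A}$ is shown to be an invariant of the pair $(\rho_{XY}, X)$, the direct-sum part of the decomposition (the index set $\{k\}$ and the subspaces $\mathcal{H}_{Y_1|k}\otimes\mathcal{H}_{Y_2|k}$) is recovered from the minimal central projections of the relevant algebra, and uniqueness up to reordering follows from uniqueness of the central decomposition.

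The next step is to pin down the tensor splitting $\mathcal{H}_{Y_1|k}\otimes\mathcal{H}_{Y_2|k}$ inside each block. Within the $k$-th block, consider the subalgebra of operators of the form $I_{Y_1|k}\otimes M$; its commutant (inside $\mathcal{B}$ of that block) is the algebra of operators of the form $N\otimes I_{Y_2|k}$. I would argue that minimality condition (i) — no strict subspace of $\mathcal{H}_{Y_1|k}$ serves as a component — forces $\rho_{X,Y_1|k}$ to have full support on $\mathcal{H}_X\otimes\mathcal{H}_{Y_1|k}$ and to be ``irreducible'' in the sense that it admits no further direct-sum refinement; then the factor $\mathcal{H}_{Y_2|k}$ is characterized as the support space of the relative commutant, which is intrinsic. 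Minimality condition (ii) — no two blocks are unitarily related — ensures the labelling is rigid: if $\phi$ matched $k$ to $k'$ with the blocks unitarily equivalent, they could be merged, contradicting minimality, so the bijection $\phi$ is forced and the claimed unitaries $U_k$ exist blockwise. Assembling $U=\bigoplus_k U_k$ gives the asserted equivalence.

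Concretely, the proof outline is: (a) show strict positivity implies each $\rho_{Y_2|k}$ is strictly positive (already stated in the excerpt) and each $\rho_{X,Y_1|k}$ is strictly positive, so no ``kernel reshuffling'' ambiguity survives; (b) define the canonical algebra $\mathcal{A}(\rho_{XY},X)$ and prove both minimal decompositions realize the same $\mathcal{A}$; (c) use uniqueness of the decomposition of a finite-dimensional $*$-algebra into its isotypic/central components to get the bijection $\phi$ and the block Hilbert spaces up to unitary; (d) invoke minimality (ii) to show $\phi$ is uniquely determined and (i) to show the internal tensor factorization is determined; (e) combine the blockwise unitaries. A useful lemma to state and use here is the classical fact that if $\sigma\otimes\tau = \sigma'\otimes\tau'$ as density matrices with all four factors strictly positive and $\sigma,\sigma'$ on the same space, $\tau,\tau'$ on the same space, then $\sigma=\sigma'$ and $\tau=\tau'$ (uniqueness of tensor factorization of a positive operator), which lets us upgrade ``unitary equivalence of $\mathcal{H}$ splittings'' to actual equality of the component states after applying $U_k$.

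The main obstacle I anticipate is step (b): giving a genuinely canonical, decomposition-independent description of the algebra $\mathcal{A}$ — i.e. proving that the ``$Y_2$ side'' of the splitting is forced by $(\rho_{XY},X)$ alone rather than being an artifact of the chosen decomposition. The subtlety is that a priori a different minimal decomposition could slice $\mathcal{H}_Y$ along different tensor-product axes; one must rule this out using the interplay between the direct-sum structure across $k$ and the tensor structure within each $k$. I would handle this by showing that $\mathcal{A}$ equals the relative commutant, inside the full matrix algebra, of the smallest algebra containing $\rho_{XY}$ and all operators supported on $\mathcal{H}_X$ — an object manifestly independent of any decomposition — and then verifying, using strict positivity and minimality, that this relative commutant has exactly one factorizing subspace structure, namely the $\{\mathcal{H}_{Y_1|k}\otimes\mathcal{H}_{Y_2|k}\}$ appearing in \eqref{TS1}. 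Once that identification is made, everything else is bookkeeping with finite-dimensional von Neumann algebras.
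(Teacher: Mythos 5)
Your overall strategy---recover the decomposition from a decomposition-independent algebraic invariant and then invoke uniqueness of the central/isotypic decomposition of a finite-dimensional $*$-algebra---is a legitimate and in fact classical route to this kind of statement (it is how Koashi--Imoto-type uniqueness results are usually proved), and it is genuinely different from the paper's argument, which directly intersects the blocks $\mathcal V_k=\mathcal H_{Y_1|k}\otimes\mathcal H_{Y_2|k}$ of one decomposition with the blocks of the other and uses minimality to show each $\mathcal V_k$ sits inside exactly one $\mathcal V'_{k'}$. However, the step you yourself identify as the crux, step (b), is carried out incorrectly: the algebra $\mathcal A=\bigoplus_k I_{Y_1|k}\otimes\mathcal B(\mathcal H_{Y_2|k})$ is \emph{not} the relative commutant of the algebra generated by $\rho_{XY}$ and $\mathcal B(\mathcal H_X)\otimes I_Y$. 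That commutant equals $\{I_X\otimes T_Y:\ [I_X\otimes T_Y,\rho_{XY}]=0\}$, and already for a single-block example $\rho_{XY}=\rho_X\otimes\rho_{Y}$ with $\rho_Y$ having distinct eigenvalues (trivial $Y_1$, so $\mathcal A=\mathcal B(\mathcal H_Y)$), this commutant is $I_X\otimes\{\rho_Y\}'$, an abelian diagonal algebra. Your recipe would therefore extract a spuriously fine block structure (the eigenspaces of $\rho_Y$) rather than the single block of the actual minimal decomposition; conversely, $\mathcal A$ contains operators that do not commute with $\rho_{XY}$ at all, so it is not contained in that commutant either. The hedge ``modulo the block structure on the $X,Y_1$ side'' cannot rescue this, because the block structure is precisely what you are trying to prove is canonical---using it in the definition of the invariant is circular.

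A correct invariant must be built from the \emph{conditional} dependence of $\rho_{XY}$ on $X$, not from commutation with $\rho_{XY}$: for instance, the algebra generated by the operators $\rho_Y^{-1/2}\,\Tr_X\bigl[(M_X\otimes I_Y)\rho_{XY}\bigr]\,\rho_Y^{-1/2}$ over all $M_X$, which in the decomposition \eqref{TS2} lies in $\bigoplus_k \mathcal B(\mathcal H_{Y_1|k})\otimes I_{Y_2|k}$ and whose commutant is $\mathcal A$ once minimality conditions (i) and (ii) are imposed. With that replacement (and a proof that minimality forces this algebra to exhaust $\bigoplus_k \mathcal B(\mathcal H_{Y_1|k})\otimes I_{Y_2|k}$, which is exactly where conditions (i) and (ii) enter), your steps (c)--(e) would go through; your auxiliary lemma on uniqueness of tensor factorizations of density matrices is fine as stated since both factors are normalized. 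As written, though, the argument identifies the wrong algebra and the central decomposition it produces does not match the claimed minimal decomposition. For comparison, note that the paper avoids algebras entirely: it shows that if some $\mathcal V_k$ met two distinct $\mathcal V'_{k'}$ nontrivially, restricting the primed decomposition to those intersections would refine the $k$-th block, contradicting minimality (i), and then uses minimality (ii) to force the bijection $\phi$; that argument is more elementary but leans on strict positivity to rule out the kernel-reshuffling ambiguity discussed in Section~\ref{SG}.
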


\begin{proof}
    Denote subspaces $\mathcal{H}_{Y_1|k} \otimes \mathcal{H}_{Y_2|k} $ by $\mathcal V_k$ and $ \mathcal{H}'_{Y_1|k'} \otimes \mathcal{H}'_{Y_2|k'}$ by $\mathcal V'_{k'}$, and let $\mathcal W_{k,k'} = \mathcal V_k\cap \mathcal V'_{k'}$. 

    Consider a fixed $k$, we wish to show that there exists only one $\mathcal W_{k,k'}$ that is non-trivial and satisfies $\mathcal V_k = \mathcal W_{k,k'}$. Assume the contrast, there are multiple $k'$'s, say $\{k'_1,k'_2,\ldots, k'_m\}$, $m\geq 2$, such that $\mathcal W_{k,k'}$ is non-trivial. Let $\rho'_{X,Y_1|k,k'}$ denote the state $\rho'_{X,Y_1|k'}$ restricted on the space $\mathcal W_{k,k'}$. Our assumption yields:
    \begin{align*}
        \rho_{X,Y_1|k}\otimes \rho_{Y_2|k} &= \bigoplus_{i=1}^m  \rho'_{X,Y_1|k,k'_i} \otimes \rho'_{Y_2|k,k'_i}.
    \end{align*}
    Since $\rho_{XY}$ is strictly positive, thus $\rho'_{X,Y_1|k,k'_i}\otimes \rho'_{Y_2|k,k'_i}>0$ as $\mathcal W_{k,k'_i}$ is non-trivial subspace. Therefore, we obtain a refinement of the decomposition (\ref{eq:decomp1}). Contradicting the minimality of the decomposition. 

    The minimality condition also guarantees that $\rho_{X,Y_1|k}$ and $\rho_{X,Y_1|l}$ are not unitarily equivalent for $k\neq l$. Hence, there must be an one-to-one correspondence $\phi$ from $\{k\}$ to $\{k'\}$, and $\rho_{X,Y_1|k}$ is unitarily equivalent to $\rho'_{X,Y_1|\phi(k)}$.
\end{proof}

\subsection{Proof of Theorem \ref{TH2}}\label{SH}
\begin{proof}
In the proof, we always choose the HJP direct-sum decomposition to be minimal in the sense of Section~\ref{SG}. 
Since $\rho_{ABCD}>0$, the minimal decomposition with respect to AAA is unique up to permutation and local unitaries. This allows us to identify the decompositions obtained from $A-(B,D)-C$ and from 
$A-(C,D)-B$ blockwise.

Again, $(ii)\implies (i)$ is immediate because $I(A;C|B,D)\leq I(A;B,C|D) = 0$ and $I(A;B|C,D)\leq I(A;B,C|D) = 0$. 

Consider the reverse direction. We assume the Markov chains $A-(B,D)-C$ and $A-(C,D)-B$ hold.
By applying 
Theorem 6 of \cite{HJP} to the Markov chain $A-(B,D)-C$,
${\cal H}_{B}\otimes {\cal H}_{D}$ is decomposed as
\begin{align}
{\cal H}_{B}\otimes {\cal H}_{D} =\bigoplus_k  {\cal H}_{BD_1,k} \otimes {\cal H}_{BD_2,k} \label{CS1}.
\end{align}
The state $\rho_{ABCD}$ has the form
\begin{align}
\rho_{ABCD}&=\bigoplus_k \rho_{A,BD_1|k} \otimes \rho_{C,BD_2|k} \label{CS2}
\end{align}
where $\rho_{A,BD_1|k} $, $\rho_{C,BD_2|k} $ are 
strictly positive matrices on 
${\cal H}_A \otimes {\cal H}_{BD_1,k}$ and ${\cal H}_C \otimes {\cal H}_{BD_2,k}$.

Since the minimality condition with respect to $A$ and $\rho_{ABCD}$
is decided by the structures of ${\cal H}_{BD1,k}$,
when the decomposition \eqref{CS1} 
is minimal with respect to 
$A$ and $\rho_{ABD}$,
the decomposition
\begin{align}
{\cal H}_{B}\otimes{\cal H}_{C}\otimes {\cal H}_{D} =\bigoplus_k  {\cal H}_{BD_1,k} \otimes 
{\cal H}_C \otimes {\cal H}_{BD_2,k} \label{CS3}
\end{align}
is minimal with respect to $A$ and $\rho_{ABCD}$.
Now, we assume that the decomposition \eqref{CS1} 
is minimal with respect to $A$ and $\rho_{ABD}$.

Next, similarly, by applying 
Theorem 6 of \cite{HJP} to the Markov chain $A-(C,D)-B$,
${\cal H}_{C}\otimes {\cal H}_{D}$ is decomposed as
\begin{align}
{\cal H}_{C}\otimes {\cal H}_{D} =\bigoplus_l  {\cal H}_{CD_1,l} \otimes {\cal H}_{CD_2,l} \label{CS4}.
\end{align}
The state $\rho_{ABCD}$ has the form
\begin{align}
\rho_{ABCD}&=\bigoplus_l \rho_{A,CD_1|l} \otimes \rho_{B,CD_2|l} \label{CS5}
\end{align}
where $\rho_{A,CD_1|l} $, $\rho_{B,CD_2|l} $ are 
strictly positive matrices on 
${\cal H}_A \otimes {\cal H}_{CD_1,l}$ and ${\cal H}_B \otimes {\cal H}_{CD_2,l}$.
Also, we assume that the decomposition \eqref{CS4} 
is minimal with respect to $A$ and $\rho_{ACD}$.
Hence, 
\begin{align}
{\cal H}_{B}\otimes {\cal H}_{C}\otimes {\cal H}_{D} 
=\bigoplus_l  {\cal H}_{CD_1,l} \otimes{\cal H}_{B}\otimes {\cal H}_{CD_2,l} \label{CS6}
\end{align}
is also minimal with respect to $A$ and $\rho_{ABCD}$.
Since $\rho_{ABCD}$ is a strictly positive density matrix,
the uniqueness of the minimal decomposition with respect to $A$ and $\rho_{ABCD}$
guarantee the one-to-one map $k \mapsto l(k)$ such that
\begin{align*}
    {\cal H}_{BD_1,k}= {\cal H}_{CD_1,l(k)}, \quad
    {\cal H}_{C}\otimes{\cal H}_{BD_2,k}={\cal H}_{B}\otimes {\cal H}_{CD_2,l(k)},
\end{align*}
with
\begin{align*}
    \rho_{A,BD_1|k} = \rho_{A,CD_1|l(k)},\quad 
    \rho_{C,BD_2|k} =\rho_{B,CD_2|l(k)}.
\end{align*}

%The condition ${\cal H}_{C}\otimes{\cal H}_{BD2|k}={\cal H}_{B}\otimes {\cal H}_{CD2|l(k)}$ means that 

Therefore, for each $k$, there exists a subspace 
${\cal K}_{D_1,k}\otimes {\cal K}_{D_2,k}\subset {\cal H}_D$ such that
${\cal H}_{C}\otimes{\cal H}_{BD_2,k}={\cal H}_{B}\otimes {\cal H}_{CD_2,l(k)}$
restricted on $\mathcal H_D$ gives ${\cal K}_{D_2,k}$
and 
${\cal H}_{BD_1,k}= {\cal H}_{CD_1,l(k)}$ restricted on $\mathcal H_D$ gives ${\cal K}_{D_1,k}$. 
Since the $k$-th subspace of decomposition (\ref{CS3}) can be written as
\begin{align*}
    {\cal H}_{BD_1,k} \otimes {\cal H}_C \otimes {\cal H}_{BD_2,k} &= {\cal H}_{BD_1,k} \otimes {\cal H}_B \otimes {\cal H}_{CD_2,l(k)},
\end{align*}
yielding ${\cal H}_C \otimes {\cal H}_{BD_2,k}$ carries the whole $\mathcal H_B\otimes \mathcal H_C$, while $\mathcal H_{BD_1,k}$ does contain any part of $\mathcal H_B$, i.e., $\mathcal H_{BD_1,k} = \mathcal K_{D_1,k}$. 

Therefore, the space ${\cal H}_D$ is decomposed as
\begin{align}
{\cal H}_D=\oplus_k {\cal K}_{D1,k}\otimes {\cal K}_{D2,k}.
\end{align}
the state $\rho_{A,BD_1|k} = \rho_{A,CD_1|l(k)}$ 
can be written as a state $\rho_{A,D_1|k}$
on ${\cal H}_A\otimes {\cal K}_{D_1,k}$, 
and 
$\rho_{C,BD_2|k} =\rho_{B,CD_2|l(k)}$
can written as a state $\rho_{BC,D_2|k}$
on ${\cal H}_B\otimes {\cal H}_C\otimes {\cal K}_{D_2,k}$.
Hence, the state $\rho_{ABCD}$ has the block form
\begin{align}
\rho_{ABCD}=\oplus_k \rho_{A,D_1|k} \otimes \rho_{BC,D_2|k},
\end{align}
which implies the Markov chain $A-D-(B,C)$.
\end{proof}

\section{Conclusion}
In this paper, we have established quantum versions of the double Markovity phenomenon and have provided two structural characterizations that parallel the most commonly used classical formulations.
First, for a tripartite state $\rho_{ABC}$, we have shown that the simultaneous Markov conditions
$A\!-\!B\!-\!C$ and $A\!-\!C\!-\!B$ are equivalent to the existence of projective measurements on $B$ and $C$
that induce a common classical label $J$ and render $A\!-\!J\!-\!(B,C)$ a Markov chain (Theorem~\ref{TH1}).
This characterization has identified an explicitly classical piece of information---the index $J$---that captures the part of $B$ and $C$ relevant to screening off $A$, thereby providing a faithful quantum analogue of the classical ``common summary variable'' picture.

Second, under a full-support (strictly positive) assumption, we have established a conditional double Markovity statement for four-party states:
for $\rho_{ABCD}>0$, the pair of Markov relations $A\!-\!(B,D)\!-\!C$ and $A\!-\!(C,D)\!-\!B$
has been shown to be equivalent to $A\!-\!D\!-\!(B,C)$ (Theorem~\ref{TH2}).
The proof has relied on a refined use of the structural decomposition of quantum Markov states, together with a uniqueness property of a minimal direct-sum decomposition under full support.

Taken together, these results have removed a key bottleneck in extending the subadditivity--doubling--rotation methodology to quantum settings.
In particular, they have supplied the missing equality-case mechanism that is frequently required when strict subadditivity is invoked in SDR-type arguments.
We expect that the present structural theorems will serve as a reusable tool in future work on quantum Gaussian extremality and related information inequalities, where one needs to convert Markov-type conclusions into explicit structural constraints on quantum states.

\bibliographystyle{IEEEtran}
\bibliography{mybiblio}

\end{document}